\newtheorem{lemma}{Lemma}
\newtheorem{theorem}{Theorem}
\title{IMPROVING SPIKING SPARSE RECOVERY VIA NON-CONVEX PENALTIES}
\name{Xiang Zhang$^{1}$, Lei Yu$^{1}$, Gang Zheng$^{2}$
}
\address{$^{1}$School of Electronic and Information, Wuhan University, China 
\\
$^{2}$Inria Lille, 40 Avenue Halley, 59650 Villeneuve d'Ascq, France}
\begin{document}
%
\maketitle
\begin{abstract}
Compared with digital methods, sparse recovery based on spiking neural networks has great advantages like high computational efficiency and low power-consumption. However, current spiking algorithms cannot guarantee more accurate estimates since they are usually designed to solve the classical optimization with convex penalties, especially the $\ell_{1}$-norm. In fact, convex penalties are observed to  underestimate the true solution in practice, while non-convex ones can avoid the underestimation.
Inspired by this, we propose an adaptive version of spiking sparse recovery algorithm to solve the non-convex regularized optimization, and provide an analysis on its global asymptotic convergence. Through experiments, the accuracy is greatly improved under different adaptive ways.
%
\end{abstract}
\begin{keywords}
Non-convex regularized optimization, spiking neural network, sparse recovery
\end{keywords}
\section{Introduction}
As an inherent property of signals, sparsity has been widely exploited in many areas \cite{mairalSparseModelingImage2014}, and sparse recovery (SR) is one of the key problems.
Given an $M$-dimensional stimulus $\mathbf{s} \in \mathbb{R}^M$ (e.g., an $M$-pixel image), SR algorithms aim at recovering a $N$-dimensional sparse signal $\mathbf{a} \in \mathbb{R}^N$ from $\mathbf{s}$, and one of the most popular methods is to solve the regularized optimization problem
\begin{equation}\label{opt-general}
\underset{\mathbf{a}\in \mathbb{R}^N }{\operatorname{min}}\ \frac{1}{2}\|\mathbf{s}-\bm{\Phi} \mathbf{a}\|_{2}^{2}+\lambda C(\mathbf{a}),\quad \lambda>0,
\end{equation}
where $\bm{\Phi}=[\bm{\phi}_{1},\dots,\bm{\phi}_{N}] \in \mathbb{R}^{M \times N}$ with $M\ll N$ denotes the dictionary, and $C(\mathbf{a}):\mathbb{R}^N\rightarrow \mathbb{R}_{+}$ is a sparsity-inducing penalty. As the most sparsity-inducing function among convex ones \cite{brucksteinSparseSolutionsSystems2009}, $\ell_1$-norm is usually chosen as the penalty $C(\mathbf{a})=\|\mathbf{a}\|_1$, 
and many relevant iterative SR algorithms (ISR) have been proposed, including ISTA \cite{daubechiesIterativeThresholdingAlgorithm2004}, BPDN \cite{chenAtomicDecompositionBasisa}, LASSO \cite{tibshiraniRegressionShrinkageSelection1996a}, etc. However, these aforementioned algorithms usually need many iterations to converge and require large computing resources, which limits their applications in practice. To tackle this, the spiking neural networks (SNNs) have been widely exploited because of their computational efficiency \cite{zylberbergSparseCodingModel2011a,tangSparseCodingSpiking2017,tangConvergenceLCAFlows2016,chouAlgorithmicPowerSpiking2018}.

%


\par 
Inspired by the communication scheme that biological neurons use for efficient information transformation, 
spiking neurons (e.g., IAF \cite{gerstnerSpikingNeuronModels2002}) operate autonomously and only communicate rarely with other neurons via asynchronous spikes \cite{ghosh-dastidarSPIKINGNEURALNETWORKS2009}. These characteristics not only reduce the power consumption, which is desirable for practical usages, but also render the hardware implementation of SNN a potentially powerful computer.
Theoretically, the computability of SNN has been proved in pioneering works \cite{maass1996lower,maassNetworksSpikingNeurons1997,maass1998pulsed}, and several works have revealed the efficiency of SNN in solving specific problems such as the dictionary learning \cite{linSPARSEDICTIONARYLEARNING2019} and SR problems \cite{zylberbergSparseCodingModel2011a,tangSparseCodingSpiking2017,tangConvergenceLCAFlows2016,chouAlgorithmicPowerSpiking2018}.
Specifically, Shapero et al. are the first to implement the spiking SR (SSR) algorithm on an FPAA chip \cite{shaperoConfigurableHardwareIntegrate2013}. Though the SSR algorithm is restricted to non-negative variables to mimic biophysical variables (e.g., firing rates),
their experiments still verified the extremely high efficiency and low power-consumption of SNN in solving SR problems. 
\par 
However, though SSR algorithms are far superior to iterative ones in efficiency, current SSR algorithms cannot guarantee more accurate estimates. This is because most SSR systems are
designed to solve the classical $\ell_{1}$-norm regularized optimization. In fact, $\ell_{1}$-norm function tends to underestimate these high-amplitude components, and often results in biased estimates, while non-convex regularized optimization can lead to more accurate estimates since non-convex penalties will not punish these components excessively \cite{candesEnhancingSparsityReweighted2008,gassoRecoveringSparseSignals2009,hyderImprovedSmoothedEll2010}. 
This characteristic hints that the use of non-convex penalties might be a breakthrough to improve the accuracy of the SSR algorithm, which is also our motivation.

\par 
In this work, we will investigate the improvement of SSR from the perspective of optimization problems, and establish a bridge between the SNN and the non-convex regularized optimization. Through an adaptive mechanism, the proposed system encourages the recovery of high-amplitude components, thereby obtaining more accurate estimates. Moreover, we prove that our adaptive SSR algorithm (A-SSR) is globally asymptotically convergent, and will converge to the critical point of the non-convex regularized optimization. Based on above discussion, the relationship between SR algorithms 
and optimization problems is illustrated in Fig. \ref{draw} for clarity.


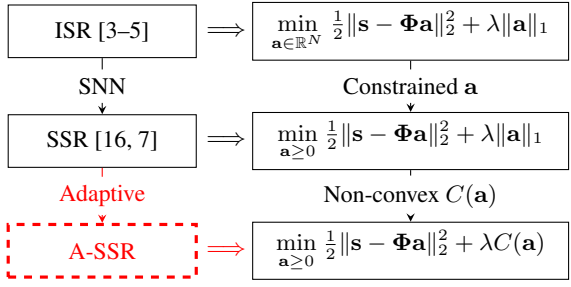
\begin{figure}
	\centering
	\tikzstyle{process} = [rectangle, minimum width=2.5cm, minimum height=0.7cm, text centered, draw=black]
	\tikzstyle{process2} = [rectangle, minimum width=4.2cm, minimum height=0.7cm, text centered, draw=black]
	\tikzstyle{noborder} = [rectangle, minimum width=2cm, minimum height=0.2cm, text centered, draw=none,fill=none]
	\tikzstyle{arrow} = [->,>=stealth]
	
	\begin{tikzpicture}
	[node distance=0.65cm]
	\small
	\node (ISR) [process] { ISR \cite{daubechiesIterativeThresholdingAlgorithm2004,chenAtomicDecompositionBasisa,tibshiraniRegressionShrinkageSelection1996a}};
	\node (SNN) [noborder,below of=ISR, yshift=-0.08cm] {SNN};
	\node (SSR) [process,below of=SNN, yshift=-0.08cm] {SSR \cite{shaperoConfigurableHardwareIntegrate2013,tangSparseCodingSpiking2017}};
	\node (Adaptive) [noborder,below of=SSR, yshift=-0.08cm,red] {Adaptive};
	\node (A-SSR) [process, below of=Adaptive, yshift=-0.08cm,dashed,very thick,red] {A-SSR};
	
	\node (conn1) [noborder,right of=ISR,xshift=1cm] {$\Longrightarrow$};
	\node (conn2) [noborder,right of=SSR,xshift=1cm] {$\Longrightarrow $};
	\node (conn3) [noborder,right of=A-SSR,xshift=1cm,red,dashed,very thick] {$\Longrightarrow$};
	
	\node (LASSO) [process2,right of=conn1,xshift=1.8cm] {$\underset{\mathbf{a}\in \mathbb{R}^N }{\operatorname{min}}\ \frac{1}{2}\|\mathbf{s}-\bm{\Phi} \mathbf{a}\|_{2}^{2}+\lambda \|\mathbf{a}\|_1$};
	\node (LASSO-1) [noborder,below of=LASSO, yshift=-0.08cm] {Constrained $\mathbf{a}$};
	\node (CLASSO) [process2,right of=conn2,xshift=1.8cm] {$\underset{\mathbf{a}\geq 0}{\operatorname{min}}\ \frac{1}{2}\|\mathbf{s}-\bm{\Phi} \mathbf{a}\|_{2}^{2}+\lambda \|\mathbf{a}\|_1$};
	\node (LASSO-2) [noborder,below of=CLASSO, yshift=-0.08cm] {Non-convex $C(\mathbf{a})$};
	\node (Non-convex) [process2,right of=conn3,xshift=1.8cm] {$\underset{\mathbf{a}\geq 0}{\operatorname{min}}\ \frac{1}{2}\|\mathbf{s}-\bm{\Phi} \mathbf{a}\|_{2}^{2}+\lambda C(\mathbf{a})$};
	
	\draw (LASSO) -- (LASSO-1);
	\draw [arrow] (LASSO-1) -- (CLASSO);
	\draw (CLASSO) -- (LASSO-2);
	\draw [arrow] (LASSO-2) -- (Non-convex);
	
	\draw (ISR) -- (SNN);
	\draw [arrow] (SNN) -- (SSR);
	\draw [red] (SSR) -- (Adaptive);
	\draw [arrow,red] (Adaptive) -- (A-SSR);
	
	\end{tikzpicture}
	
	\caption{.\ \  The relationship between SR algorithms and optimization problems. $A \Longrightarrow B$ represents that algorithm $A$ solves optimization $B$. Red part indicates our contribution.}
	\label{draw}
\end{figure}

\section{Adaptive Spiking Sparse Recovery}\label{chapter-2}
Under the assumption that each dictionary vector satisfies the unit norm $\|\bm{\phi}_{i}\|_{2}=1$, SSR is proved to converge to the solution of CLASSO problem \cite{tangSparseCodingSpiking2017}: ${\operatorname{min}_{\mathbf{a} \geq 0}} \frac{1}{2}\|\mathbf{s}-\bm{\Phi} \mathbf{a}\|_{2}^{2}+\lambda\|\mathbf{a}\|_{1}$. Following this, our target optimization can be formulated as  
\begin{equation}\label{concave-CLASSO}
\underset{\mathbf{a}\geq 0}{\operatorname{min}}\ E(\mathbf{a})=\frac{1}{2}\|\mathbf{s}-\bm{\Phi} \mathbf{a}\|_{2}^{2}+\lambda\sum_{i=1}^{N} g(|a_{i}|),
\end{equation}
where $g(\cdot)$ is a non-convex penalty. Next, we will first introduce the dynamic of the A-SSR algorithm, and then discuss about how to choose a suitable penalty $g(\cdot)$. Finally, we will prove the convergence of the A-SSR system, and point out that A-SSR will converge to the critical point of (\ref{concave-CLASSO}).
\subsection{The Dynamic of A-SSR Algorithm}\label{sec-1}
Based on the IAF model, each neuron in A-SSR will go through three states: \emph{inactive, active} and \emph{refractory period}.
Taking neuron-$i$ as an example, its potential $\nu_i(t)$ will be charged up according to the soma current $\mu_i(t)$ under inactive states, and neuron-$i$ will not become active until its potential exceeds the spiking threshold $\nu^{s}$, i.e. $\nu_i(t) > \nu^{s}$. When neuron-$i$ is active, its potential $\nu_{i}(t)$ will be immediately reset to the resting potential $\nu^{r}=0$, and simultaneously a spike signal will be sent to affect other neurons.
Afterward, neuron-$i$ will enter the refractory period $t^{ref}>0$, during which $\nu_{i}(t)$ remains at $\nu^{r}$. 
In the end, neuron-$i$ will return to the inactive state and start charging again. 
\par 
Similar to \cite{tangSparseCodingSpiking2017}, we define the soma current of neuron-$i$ as
\begin{equation}\label{mu}
\begin{aligned}
\mu_{i}(t) &=b_{i}-\sum_{j \neq i} \Omega_{i j}\left(\alpha * \sigma_{j}\right)(t),
\\
\tau\dot{\mu}_{i}(t)&=b_{i}-\mu_{i}(t)-\sum_{j \neq i} \Omega_{i j} \sigma_{j}(t),
\end{aligned}
\end{equation}
where $b_{i}\overset{\text { def }}{=}\bm{\phi}_{i}^{T} \mathbf{s}$ denotes the bias current; $\Omega_{i j}\overset{\text { def }}{=}\bm{\phi}_{i}^{T} \bm{\phi}_{j}$ measures the influence of spikes according to the similarity of receptive fields between neuron-$i$ and neuron-$j$; $\alpha(t)\stackrel{\text { def }}{=}e^{-t/\tau}{\rm H}(t)/\tau$ is a decay function, where $\tau$ is the time constant and ${\rm H}(t)$ denotes the Heaviside step function; $\sigma_{j}(t)\stackrel{\text { def }}{=}\sum_{k}\delta(t-t_{j,k})$ represents the spike train of neuron-$j$,  and \{$t_{j,k}$\} is its ordered spiking time sequence.
Equation (\ref{mu}) indicates that when no spike is generated, $\mu_{i}(t)$ is equal to the bias current $b_i$, and if other neurons become active, e.g., neuron-$j$, the spike will be weighted by $\Omega_{i j}$ first, and then affect the soma current of neuron-$i$ with an exponential decay.

\par 
Then, based on $\mu_{i}(t)$ and a non-convex penalty $g(\cdot)$, we can define the potential with consideration of the resetting mechanism:
\begin{equation}\label{potential}
\nu_{i}(t)\stackrel{\text { def }}{=}\int_{0}^{t}(\mu_{i}(s)-\Lambda_i(t))ds-\nu^{s}\int_{0}^{t} \sigma_i(s)ds,
\end{equation}
where $\Lambda_i(t)=\lambda g'(|a_{i}(t)|)$ indicates the leakage current and $g'(|a_i(t)|)$ is the first derivative of $g(|a_i(t)|)$ w.r.t. $|a_i(t)|$; $\lambda$ is the same constant in (\ref{concave-CLASSO}), and $a_{i}(t)$ is the firing rate of neuron-$i$, which is defined as
$
a_{i}(t)\stackrel{\text { def }}{=}\frac{1}{t} \int_{0}^{t}\sigma_{i}(s)ds
$.
Apart from $\mu_i(t)$, equation (\ref{potential}) hints that the speed of potential charging is also affected by the firing rate, since $\Lambda_i(t)$ will update adaptively according to $a_i(t)$. Moreover, equation (\ref{potential}) also establishes the connection between the chosen penalty $g(\cdot)$ and the adaptive way of A-SSR.
\par 
Based on above, our A-SSR algorithm can be fully described by  (\ref{mu}), (\ref{potential}) together with the definition of the firing rate $a_i(t)$ and the spike train $\sigma_i(t)$.
For simplicity, we set the spiking threshold $\nu^{s}=1$ from now on, and additionally set the lower bound of potential $\nu^{-}=0$, so that the potential $\nu_i(t)$ will be reset to $\nu^{-}$ once $\nu_i(t)<\nu^{-}$. 
In the next section, we will discuss the rules of choosing a suitable penalty $g(\cdot)$.

\subsection{Rules of Choosing Penalty $g(\cdot)$}\label{sec-2}
Based on previous works \cite{balavoineConvergenceNeuralNetwork2013a,chenConvergenceGuaranteesNonConvex2014,fossonBiconvexAnalysisLasso2018}, we summarize the rules as
\begin{enumerate}
	\item $g(\cdot)$ is non-negative and subanalytic on $[0, +\infty)$.
	\item The first derivative of $g(\cdot)$ is continuous and non-negative on $[0, +\infty)$, i.e. $g'(\cdot) \geq 0$.
	\item The second derivative of $g(\cdot)$ exists, and is bounded $g''(x)\in (-\frac{1}{\lambda},0), \forall x\in (0, +\infty)$.
	
\end{enumerate}
\par 
With rule-1, the objective $E(\mathbf{a})$ is also subanalytic, thus the convergence of A-SSR system can be guaranteed (see Section \ref{sec-3} for details). Rule-2  ensures that the term $\Lambda_i(t)$ will not go negative, in case $\nu_i(t)$ is overcharged. Last but not least, rule-3 states that the choice of penalties should be related to parameter $\lambda$, which is beneficial to the system stability.
Also, rule-3 implies that $g'(|a_i|)$ is monotonically decreasing over $|a_i|$ in our case. Hence, these neurons with higher firing rates are more likely to spike according to (\ref{potential}).
In other words, the penalty $g(\cdot)$ will not raise as fast as convex ones, thus these large elements can be better estimated. 
Following above, we can write the convergence result now.

\subsection{Convergence of A-SSR}\label{sec-3}
\begin{theorem}\label{main-theo}
	Under 
	a suitable penalty $g(\cdot)$ in Section \ref{sec-2}, we additionally define the average soma current $u_{i}(t)$ as
	\begin{equation}\label{u}
	u_{i}(t)\stackrel{\rm { def }}{=}\frac{1}{t} \int_{0}^{t} \mu_{i}(s)ds,
	\end{equation}
	then the average soma current $\mathbf{u}(t)$ and the firing rate $\mathbf{a}(t)$ of A-SSR are globally asymptotically convergent, and $\mathbf{a}(t)$ will converge to the critical point of (\ref{concave-CLASSO}) as time goes to infinity.
\end{theorem}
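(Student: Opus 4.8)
\noindent\emph{Proof plan.} I would follow the two–stage strategy that is by now standard for spiking LCA–type systems \cite{tangSparseCodingSpiking2017,balavoineConvergenceNeuralNetwork2013a}: first replace the discontinuous spike dynamics by an averaged gradient–like flow, and then treat that flow with a Lyapunov argument strengthened by the Kurdyka--Lojasiewicz (KL) inequality.

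\textbf{Stage 1: reduction to an averaged flow.} The refractory period $t^{ref}>0$ forces $a_i(t)\le 1/t^{ref}$, so $\mathbf a(t)$ is bounded; since $\int_0^{\infty}\alpha=1$, Fubini's theorem gives $\tfrac1t\int_0^t(\alpha*\sigma_j)(s)\,ds = a_j(t)+o(1)$, hence from (\ref{mu})
\begin{equation}\label{avg-current}
u_i(t) = b_i - \sum_{j\neq i}\Omega_{ij}\,a_j(t) + \varepsilon_i(t),\qquad \varepsilon_i(t)\to 0,
\end{equation}
and $\mathbf u(t)$ is bounded as well. Dividing (\ref{potential}) by $t$, using that the physical potential $\nu_i(t)$ — kept bounded by the resets at $\nu^{s}$ and $\nu^{-}$ — satisfies $\nu_i(t)/t\to 0$, and recalling $\nu^{s}=1$, one obtains the scalar balance
\begin{equation}\label{rate-current}
a_i(t) = \left[\, u_i(t) - \frac{\lambda}{t}\int_0^t g'(|a_i(s)|)\,ds \,\right]_+ + o(1),
\end{equation}
the non-negativity being automatic since $a_i$ is a firing rate. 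Here rule-2 ($g'\ge 0$) makes the leakage non-negative, so (\ref{rate-current}) has exactly this thresholded form, while rule-3 is precisely what makes $x\mapsto x+\lambda g'(x)$ a monotone Lipschitz bijection of $(0,\infty)$, since $\frac{d}{dx}(x+\lambda g'(x))=1+\lambda g''(x)\in(0,1)$; write $T$ for its inverse extended by $T(u)=0$ for $u\le\lambda g'(0)$. Up to the vanishing residuals, (\ref{avg-current})--(\ref{rate-current}) then mirror the analog adaptive-LCA flow $\tau\dot{\mathbf u}=-\nabla E(\mathbf a)$, $a_i=T(u_i)$, after the standard logarithmic time change $t\mapsto\ln t$ that removes the $1/t$ scaling.

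\textbf{Stage 2: Lyapunov decrease and KL convergence.} Along this flow, using $u_i-a_i=\lambda g'(a_i)$ on the active set, $\Omega_{ii}=\|\bm{\phi}_i\|_2^2=1$, and $\dot a_i=\dot u_i/(1+\lambda g''(a_i))$, a direct computation gives
\begin{equation}\label{lyap}
\frac{d}{dt}E(\mathbf a(t)) = -\frac{1}{\tau}\sum_{i} \frac{(\nabla_i E(\mathbf a))^2}{1+\lambda g''(a_i)} \le 0,
\end{equation}
the denominators being strictly positive by rule-3; the constraint $\mathbf a\ge 0$ and the kink of $g(|\cdot|)$ at the origin are absorbed with one-sided derivatives, and equality in (\ref{lyap}) holds exactly at points satisfying the KKT conditions of (\ref{concave-CLASSO}). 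Since $\mathbf a(t),\mathbf u(t)$ are bounded and $E\ge 0$ is bounded below, LaSalle's invariance principle confines the $\omega$-limit set — compact and connected by boundedness — to the critical points of (\ref{concave-CLASSO}). Finally, rule-1 makes $E$ — hence also $E$ plus the indicator of $\{\mathbf a\ge 0\}$ — subanalytic, so it satisfies the KL inequality in a neighborhood of that limit set; the usual trajectory-length estimate then shows the rescaled trajectory has finite length, so $\mathbf a(t)$ converges to a single critical point of (\ref{concave-CLASSO}), and $\mathbf u(t)$ converges through (\ref{avg-current}).

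\textbf{The main obstacle} is Stage 1: rigorously justifying that the spike-averaged pair $(\mathbf u(t),\mathbf a(t))$ is shadowed by the smooth flow with the same limit. This requires (i) uniform control of the inter-spike intervals / of $\nu_i(t)$ to bound the $o(1)$ residuals in (\ref{avg-current})--(\ref{rate-current}), and (ii) closing the self-referential loop created by the adaptive leakage $\Lambda_i(t)=\lambda g'(|a_i(t)|)$ — e.g.\ a Gr\"onwall-type estimate showing $\mathbf a(t)$ is asymptotically slowly varying, so that $\tfrac1t\int_0^t g'(|a_i(s)|)\,ds$ and $g'(|a_i(t)|)$ share a limit. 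Once these residual estimates are in place, Stage 2 goes through as in the cited convex and non-convex LCA analyses essentially verbatim.
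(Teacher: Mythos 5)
Your plan follows essentially the same route as the paper: average the spiking dynamics into an LCA-type subgradient flow of $E$ (the paper's auxiliary system (\ref{aux-system})), establish boundedness of $\bm{\mu}$ and $\mathbf{u}$ via the refractory period, and invoke subanalyticity/\L{}ojasiewicz to get convergence to a single critical point --- the paper simply outsources your Stage 2 to Theorems 1 and 2 of \cite{balavoineConvergenceNeuralNetwork2013a} after verifying their hypotheses, rather than redoing the LaSalle/KL argument by hand. The only substantive difference is that the ``main obstacle'' you flag in Stage 1 largely dissolves under the paper's definitions, for two reasons. First, there is no self-referential time-average of $g'$ to control: in (\ref{potential}) the leakage $\Lambda_i(t)=\lambda g'(|a_i(t)|)$ is evaluated at the \emph{current} time inside the integral over $s$, so $\tfrac1t\int_0^t\Lambda_i(t)\,ds=\Lambda_i(t)$ exactly, and one obtains the exact identity $a_i(t)=\max\bigl(u_i(t)-\Lambda_i(t)-\nu_i(t)/t,\,0\bigr)$ whose residual $\nu_i(t)/t$ is killed simply by $\nu_i\in[\nu^-,\nu^s]$; no Gr\"onwall or slow-variation argument is needed. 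Second, instead of your algebraic quasi-equilibrium relation for $u_i$, the paper differentiates (\ref{u}) to get $\dot u_i=(\mu_i-u_i)/t$ and time-averages the second line of (\ref{mu}) to obtain the exact perturbed ODE $\tau\dot u_i=b_i-u_i-\sum_{j\ne i}\Omega_{ij}a_j-\tau\,(u_i-\mu_i(0))/t$, whose perturbation vanishes by the boundedness lemma; this preserves the flow structure directly, with no logarithmic time change. With these two exact identities your residual estimates become immediate, and the remainder of your argument (monotone output map from rule-3, Lyapunov decrease, KL convergence, $\dot{\mathbf u}^*=0$ forcing $0\in\partial E(\mathbf a^*)$) matches the paper's Lemmas \ref{lemma-fir}--\ref{lemma-4}.
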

To prove it, we start with an auxiliary system designed as follows (the dependence on time is omitted for readability).
\begin{equation}\label{aux-system}
\begin{aligned}
\tau\dot{u}_{i}^{au}&=b_{i}-u_{i}^{au}-\sum_{j \neq i} \Omega_{i j} a_{j}^{au},
\\
a_i^{au}&=\max(u_i^{au}-\Lambda_i^{au},0)=\left\{\begin{array}{ll}
f\left(u_i^{au}\right), & \text{if}\ u_i^{au}>\Lambda^{au}_i, \\
0, & \text{else},
\end{array}\right. 
\end{aligned}
\end{equation}
where $f(u_i^{au}(t))=u_i^{au}(t)-\Lambda^{au}_i(t)=u_i^{au}(t)-\lambda g'(|a_i^{au}(t)|)$ is defined on $[0,+\infty)$, and the variables are restricted to non-negative (i.e. $u_i^{au}(t)\geq0$ and $a_i^{au}(t)\geq 0$).
Based on this, the proof can be divided into four lemmas. Lemma \ref{lemma-fir} and Lemma \ref{lemma-2} together prove the convergence property of the auxiliary system. Then, with the boundedness of $\bm{\mu}(t)$ and $\mathbf{u}(t)$ in Lemma \ref{lemma-1}, we can finally prove that A-SSR will converge to the auxiliary system as time goes to infinity, and the firing rate $\mathbf{a}(t)$ will converge to the critical point of (\ref{concave-CLASSO}).
\begin{lemma}\label{lemma-fir}
	With the objective $E(\mathbf{a}^{au})$ in (\ref{concave-CLASSO}), the auxiliary system satisfies that $\tau\dot{\mathbf{u}}^{au} \in -\partial E(\mathbf{a}^{au}).$
\end{lemma}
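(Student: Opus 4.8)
The plan is to verify the claimed inclusion \emph{componentwise} and reduce it to a pointwise statement about the thresholding map that defines $\mathbf{a}^{au}$. First I would split the objective as $E(\mathbf{a})=q(\mathbf{a})+\Psi(\mathbf{a})$, where $q(\mathbf{a})=\tfrac12\|\mathbf{s}-\bm{\Phi}\mathbf{a}\|_2^2$ is $C^1$ and $\Psi(\mathbf{a})=\sum_{i=1}^{N}\psi_i(a_i)$ is separable with $\psi_i(a_i)=\lambda g(|a_i|)+\iota_{[0,+\infty)}(a_i)$, the indicator term encoding the constraint $\mathbf{a}\ge 0$. For the limiting subdifferential $\partial$ (the notion relevant under the subanalyticity hypothesis of rule-1), the $C^1$-plus-separable sum rule gives $\partial E(\mathbf{a})=\nabla q(\mathbf{a})+\prod_i\partial\psi_i(a_i)$; and, using the unit-norm assumption so that $\Omega_{ii}=\|\bm{\phi}_i\|_2^2=1$, one has $[\nabla q(\mathbf{a})]_i=\sum_j\Omega_{ij}a_j-b_i=a_i+\sum_{j\neq i}\Omega_{ij}a_j-b_i$.

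Substituting this into the dynamics $\tau\dot{u}_i^{au}=b_i-u_i^{au}-\sum_{j\neq i}\Omega_{ij}a_j^{au}$, the target $\tau\dot{u}_i^{au}\in-[\partial E(\mathbf{a}^{au})]_i$ collapses, after cancelling the shared affine part $b_i-\sum_{j\neq i}\Omega_{ij}a_j^{au}$ and absorbing the diagonal term, to the single scalar membership
\[
w_i:=u_i^{au}-a_i^{au}\ \in\ \partial\psi_i\!\left(a_i^{au}\right).
\]
I would then distinguish the two branches of $a_i^{au}=\max(u_i^{au}-\Lambda_i^{au},0)$. If $u_i^{au}>\Lambda_i^{au}$, then $a_i^{au}=u_i^{au}-\Lambda_i^{au}>0$, so $w_i=\Lambda_i^{au}=\lambda g'(a_i^{au})$; since $\psi_i$ is differentiable at the interior point $a_i^{au}>0$ with $\psi_i'(a_i^{au})=\lambda g'(a_i^{au})$ (the constraint being inactive there), $\partial\psi_i(a_i^{au})=\{\lambda g'(a_i^{au})\}\ni w_i$. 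If $u_i^{au}\le\Lambda_i^{au}$, then $a_i^{au}=0$, $\Lambda_i^{au}=\lambda g'(0)$, and $w_i=u_i^{au}$; the Fréchet subdifferential of the constrained penalty at $0$ is $\hat\partial\psi_i(0)=(-\infty,\lambda g'(0)]$ (the one-sided derivative $\lambda g'(0^+)=\lambda g'(0)$ by rule-2, plus the normal cone $(-\infty,0]$ of $[0,+\infty)$ at $0$), so the branch hypothesis $u_i^{au}\le\lambda g'(0)$ is exactly $w_i\in\hat\partial\psi_i(0)\subseteq\partial\psi_i(0)$. Collecting the two cases over $i=1,\dots,N$ yields $\tau\dot{\mathbf{u}}^{au}\in-\partial E(\mathbf{a}^{au})$.

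The delicate step — and the one I would write out in full — is the boundary branch $a_i^{au}=0$: one must (i) invoke rule-3, namely $\tfrac{d}{da}(a+\lambda g'(a))=1+\lambda g''(a)\in(0,1)$, to see that the implicit relation $a_i^{au}=\max(u_i^{au}-\lambda g'(|a_i^{au}|),0)$ is well posed, so that the two branches genuinely partition the state space (in particular $u_i^{au}>\Lambda_i^{au}\iff u_i^{au}>\lambda g'(0)$), and (ii) compute the subdifferential of the \emph{constrained} penalty $\psi_i$, not of $a\mapsto\lambda g(|a|)$ alone, at $0$, so that the inequality $u_i^{au}\le\Lambda_i^{au}$ produced by the $\max$ is precisely the membership certificate; non-negativity of $u_i^{au}$ additionally places $w_i$ in the expected range $[0,\lambda g'(0)]$. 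The only other point to be careful about is the choice of subdifferential: taking the limiting (Mordukhovich) subdifferential keeps the sum rule an equality on the $C^1$ summand and is consistent with the convergence argument used in the subsequent lemmas.
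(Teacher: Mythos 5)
Your proof is correct and follows essentially the same route as the paper: both reduce the claim to the componentwise membership $u_i^{au}-a_i^{au}\in\partial\psi_i(a_i^{au})$ (the paper writes this as $u_i^{au}-a_i^{au}\in\Lambda_i^{au}\operatorname{sgn}(a_i^{au})$) and verify it on the two branches of the $\max$. Your treatment of the boundary case via the indicator $\iota_{[0,+\infty)}$ and its normal cone is a slightly more careful rendering of what the paper handles implicitly through the set-valued $\operatorname{sgn}(0)$, but it is the same argument.
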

\begin{proof}
	By introducing the sign function $\operatorname{sgn}(\cdot)$, we have
	\begin{equation}\label{u-a}
	u_i^{au}(t)-a_i^{au}(t)\in \Lambda^{au}_i(t)\operatorname{sgn}(a_i^{au}(t))
	\end{equation}
	based on the following discussion:
	1) When $u_i^{au}(t)>\Lambda^{au}_i(t)$, $\operatorname{sgn}(a_i^{au}(t))=\operatorname{sgn}(u_i^{au}(t))=1$ holds since $\Lambda^{au}_i(t) \geq 0$ from the rule-2 in Section \ref{sec-2}. Hence, we can derive the relationship $u_i^{au}(t)-a_i^{au}(t)=\Lambda^{au}_i(t)=\Lambda^{au}_i(t)\operatorname{sgn}(a_i^{au}(t))$. 2) When $u_i^{au}(t)\in \left[0,\Lambda^{au}_i(t)\right]$, we have $a_i^{au}(t)=0$, therefore $u_i^{au}(t)-a_i^{au}(t)=u_i^{au}(t)\in \Lambda^{au}_i(t)\operatorname{sgn}(a_i^{au}(t))$. 
	\par 
	Then, according to (\ref{u-a}), the state $u_i^{au}(t)$ in (\ref{aux-system}) can be rewritten as $\tau\dot{\mathbf{u}}^{au}\in \mathbf{b}- \bm{\Omega} \mathbf{a}^{au} -\bm{\Lambda}^{au}\operatorname{sgn}(\mathbf{a}^{au})$, which is also equivalent to
	$
	\tau\dot{\mathbf{u}}^{au}\in \bm{\Phi}^{T}\mathbf{s}-\bm{\Phi}^{T} \bm{\Phi} \mathbf{a}^{au} -\lambda {g}'(|\mathbf{a}^{au}|) \operatorname{sgn}(\mathbf{a}^{au})
	$ since $\mathbf{b}=\bm{\Phi}^{T}\mathbf{s}$ and $\bm{\Omega}=\bm{\Phi}^{T} \bm{\Phi}$.
	On the other hand, the sub-differentiation of $E(\mathbf{a}^{au})$ w.r.t. $\mathbf{a}^{au}$ yields
	$
	\partial E(\mathbf{a}^{au})=\left( -\bm{\Phi}^{T}\mathbf{s}+\bm{\Phi}^{T} \bm{\Phi} \mathbf{a}^{au} +\lambda {g}'(|\mathbf{a}^{au}|) \operatorname{sgn}(\mathbf{a}^{au}) \right)^{T},
	$
	then it is straightforward that  
	$
	\tau\dot{\mathbf{u}}^{au} \in -\partial E(\mathbf{a}^{au})
	$ holds.
\end{proof}
Next, we are going to prove the convergence of the auxiliary system. In fact, the global asymptotic convergence of such systems has been proved in \cite{balavoineConvergenceNeuralNetwork2013a} using Łojasiewicz Inequality. Here, we focus on proving the conditions required by Theorem 1 and Theorem 2 in \cite{balavoineConvergenceNeuralNetwork2013a}.

\begin{lemma}\label{lemma-2}
	The state $\mathbf{u}^{au}(t)$ and the output $\mathbf{a}^{au}(t)$ of the auxiliary system (\ref{aux-system}) are globally asymptotically convergent.
\end{lemma}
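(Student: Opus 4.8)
The plan is to recognize the auxiliary system (\ref{aux-system}) as a Hopfield/LCA-type neural flow and to invoke the general convergence results of \cite{balavoineConvergenceNeuralNetwork2013a}: their Theorem 1 and Theorem 2 guarantee that the state and the output of such a flow each converge to a single limit point, via the nonsmooth {\L}ojasiewicz inequality, provided three structural conditions hold --- (i) the driving energy is bounded below and subanalytic, (ii) the state-to-output activation map is a single-valued, continuous, non-decreasing threshold function so that the flow is the associated (sub)gradient flow, and (iii) the trajectory stays bounded. Accordingly, the body of the proof is simply the verification of (i)--(iii) from the three rules of Section \ref{sec-2} and Lemma \ref{lemma-fir}, after which Theorems 1 and 2 of \cite{balavoineConvergenceNeuralNetwork2013a} apply directly.

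Condition (i) is immediate: $E(\mathbf{a}^{au})=\frac{1}{2}\|\mathbf{s}-\bm{\Phi}\mathbf{a}^{au}\|_2^2+\lambda\sum_i g(|a_i^{au}|)\ge 0$ since $g\ge 0$ by rule-1, so $E$ is bounded below; and $E$ is subanalytic because the data-fidelity term is a polynomial (hence analytic), $g(|\cdot|)$ is subanalytic on $[0,+\infty)$ by rule-1, and a finite sum of subanalytic functions is subanalytic --- so $E$ satisfies the {\L}ojasiewicz inequality at each of its critical points. For condition (ii) I would show that the relation $a_i^{au}=\max(u_i^{au}-\lambda g'(|a_i^{au}|),0)$ determines $a_i^{au}$ uniquely and continuously from $u_i^{au}$: on the active branch it reads $u_i^{au}=a_i^{au}+\lambda g'(a_i^{au})=:h(a_i^{au})$, and by rule-3, $h'(x)=1+\lambda g''(x)\in(0,1)$, so $h$ is a strictly increasing bijection of $(0,+\infty)$ onto $(\lambda g'(0),+\infty)$; thus the activation $f=h^{-1}$, extended by $0$ on $[0,\lambda g'(0)]$, is single-valued, non-decreasing, and continuous at the kink $u_i^{au}=\lambda g'(0)$, exactly matching the threshold-function template of \cite{balavoineConvergenceNeuralNetwork2013a} (rule-2, $g'\ge 0$, is what makes the inactive region a genuine interval and keeps the sign bookkeeping of Lemma \ref{lemma-fir} consistent). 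Combined with Lemma \ref{lemma-fir}, (\ref{aux-system}) is then precisely the (sub)gradient flow $\tau\dot{\mathbf{u}}^{au}\in-\partial E(\mathbf{a}^{au})$, and a chain-rule computation using $\dot{\mathbf{a}}^{au}=\mathrm{diag}(f'(u_i^{au}))\dot{\mathbf{u}}^{au}$ with $f'\ge 0$ gives $\frac{d}{dt}E(\mathbf{a}^{au}(t))\le 0$, i.e. $E$ is non-increasing along trajectories.

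Finally, for condition (iii), from $\tau\dot{u}_i^{au}=b_i-u_i^{au}-\sum_{j\neq i}\Omega_{ij}a_j^{au}$ together with $0\le a_j^{au}\le u_j^{au}$ (since $h(a)\ge a$ because $g'\ge 0$) and $\|\bm{\phi}_i\|_2=1$, the self-decay term $-u_i^{au}$ plus the already-established monotone decrease of $E$ confine the orbit to a compact set, so $\mathbf{u}^{au}(t)$ is bounded and hence so is $\mathbf{a}^{au}(t)=f(\mathbf{u}^{au}(t))$. With (i)--(iii) in hand, Theorems 1 and 2 of \cite{balavoineConvergenceNeuralNetwork2013a} yield the global asymptotic convergence of $\mathbf{u}^{au}(t)$ and $\mathbf{a}^{au}(t)$, which is the claim. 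I expect step (ii) to be the main obstacle: checking that the implicitly defined non-convex thresholding truly fits the hypotheses of \cite{balavoineConvergenceNeuralNetwork2013a} is where rule-3's two-sided bound $g''\in(-1/\lambda,0)$ is indispensable --- the lower bound $-1/\lambda$ for single-valuedness and monotonicity of $h^{-1}$, and $g''<0$ for the non-convex, expansive (less biased) character of the threshold --- and the matching of the active and inactive branches at $u_i^{au}=\lambda g'(0)$, where $f'$ may be large, needs a little care.
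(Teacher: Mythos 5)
Your proposal is correct and follows essentially the same route as the paper: both reduce the lemma to verifying the hypotheses of Theorems 1 and 2 of \cite{balavoineConvergenceNeuralNetwork2013a} (subanalyticity of $E$ from rule-1, the subgradient-flow structure from Lemma \ref{lemma-fir}, and positivity/local boundedness of $f'=1/(1+\lambda g'')$ from rule-3), with your inverse-function analysis of $h(a)=a+\lambda g'(a)$ being just a more explicit rendering of the paper's implicit differentiation of $f$. The only difference is that you spell out the boundedness of the auxiliary trajectory, a point the paper leaves implicit among the "remaining conditions."
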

\begin{proof}
	First of all, (\ref{u-a}) corresponds to the condition (3) in \cite{balavoineConvergenceNeuralNetwork2013a}, since $C(a_i^{au}(t))= g(|a_i^{au}(t)|)$ in our case.
	Additionally, we have the first derivative of $f(u_i^{au}(t))$ when $u_i^{au}(t)>\Lambda_i^{au}(t)$:	
	$$
	\begin{aligned}
	f'(u_i^{au}(t)) &= 1 - \frac{\mathrm{d}\Lambda_i(t)}{\mathrm{d}a_i^{au}(t)} \frac{\mathrm{d}a_i^{au}(t)}{\mathrm{d}u_i^{au}(t)}
	\\
	&=1-\lambda g''(|a_i^{au}(t)|)\operatorname{sgn}(a_i^{au}(t)) f'(u_i^{au}(t)).
	\end{aligned}
	$$
	With the rule-3 in Section \ref{sec-2}, one can obtain that 
	$
	f'(u_i^{au}(t))=1/(1+\lambda g''(|a_i^{au}(t)|)\operatorname{sgn}(a_i^{au}(t)))>0,
	$
	which corresponds to the condition (5b) in \cite{balavoineConvergenceNeuralNetwork2013a}. Also, $f'(\cdot)$ satisfies the condition (7) in \cite{balavoineConvergenceNeuralNetwork2013a}, since one can always find an upper bound of $f'(\cdot)$ on bounded intervals. As for the condition (6) in \cite{balavoineConvergenceNeuralNetwork2013a}, it is used to guarantee that $E(\mathbf{a})$ is subanalytic, which is also met by the rule-1 in Section \ref{sec-2}.
	For the remaining conditions, there might be some small differences due to the one-side case we are concerned about, but they will not affect the final result of global asymptotic convergence.
\end{proof}
\begin{figure*}[ht]
	\centering
	\begin{subfigure}{0.26\textwidth}
		\includegraphics[width=\textwidth,height=0.618\textwidth]{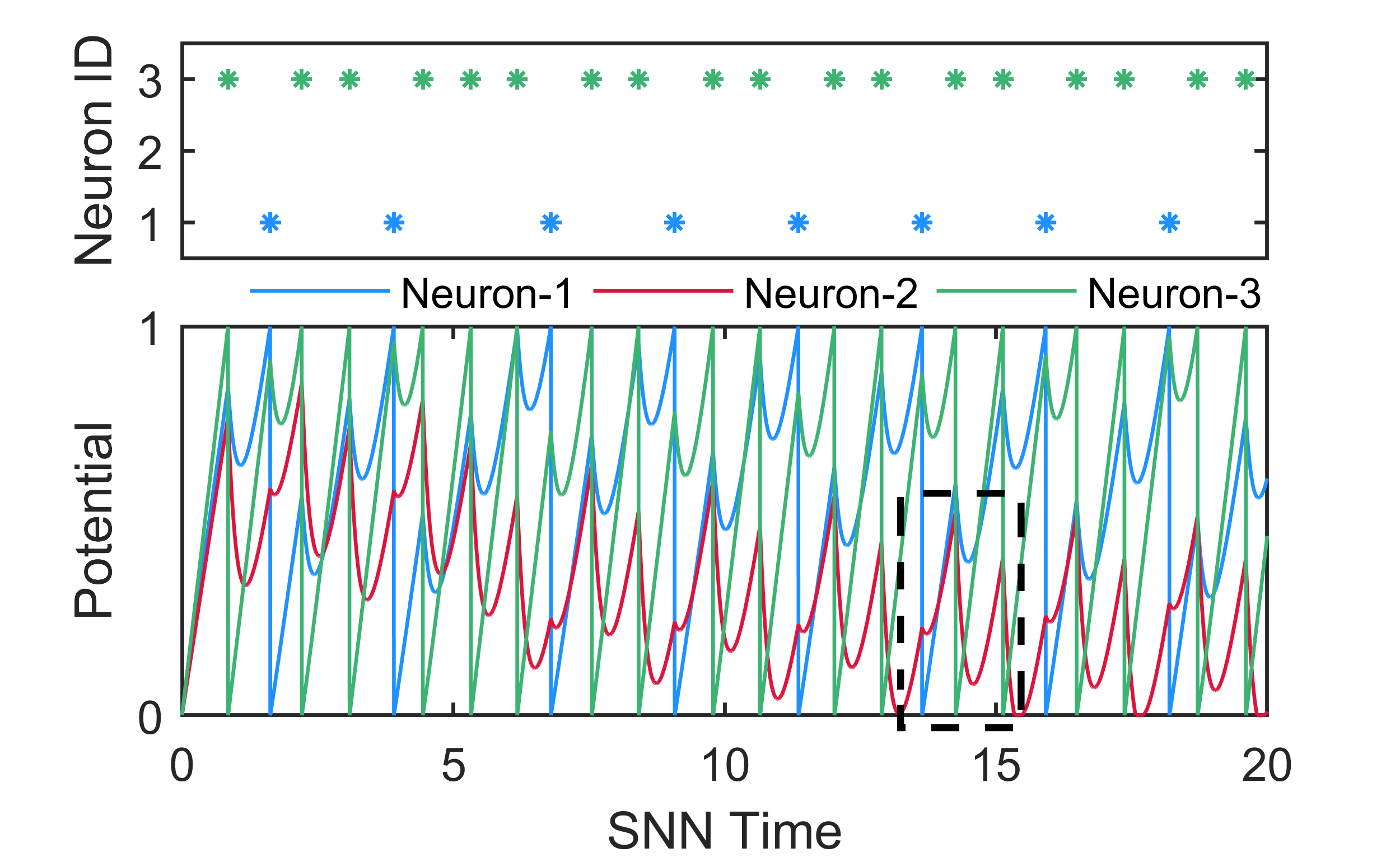}
		\caption{\rmfamily \fontsize{8pt}{0} SSR}
		\label{SLCA}
	\end{subfigure}
	\begin{subfigure}{0.26\textwidth}
		\includegraphics[width=\textwidth,height=0.618\textwidth]{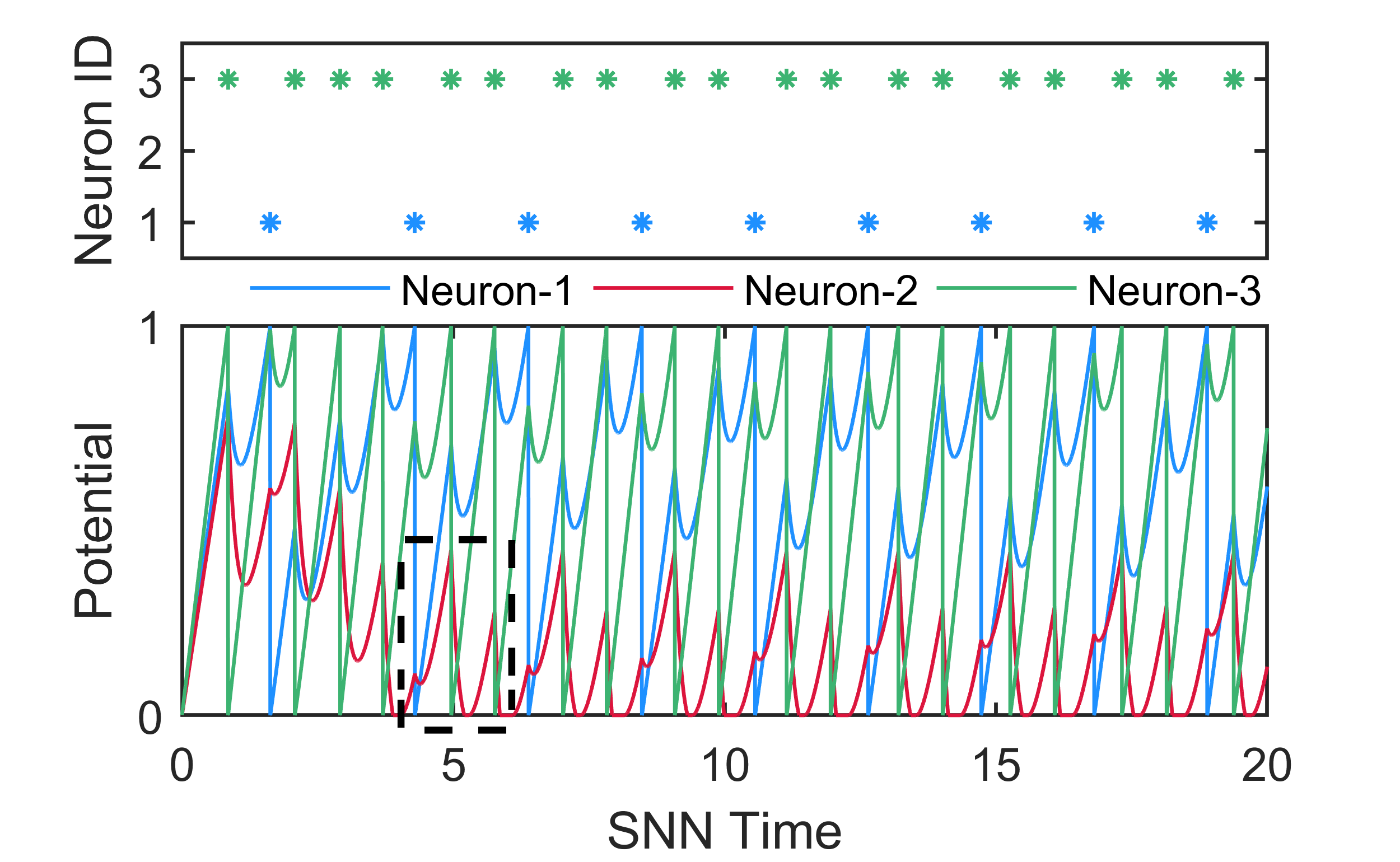}
		\caption{\rmfamily \fontsize{8pt}{0} A-SSR}
		\label{A-SSR}
	\end{subfigure}
	\begin{subfigure}{0.26\textwidth}
		\includegraphics[width=\textwidth,height=0.618\textwidth]{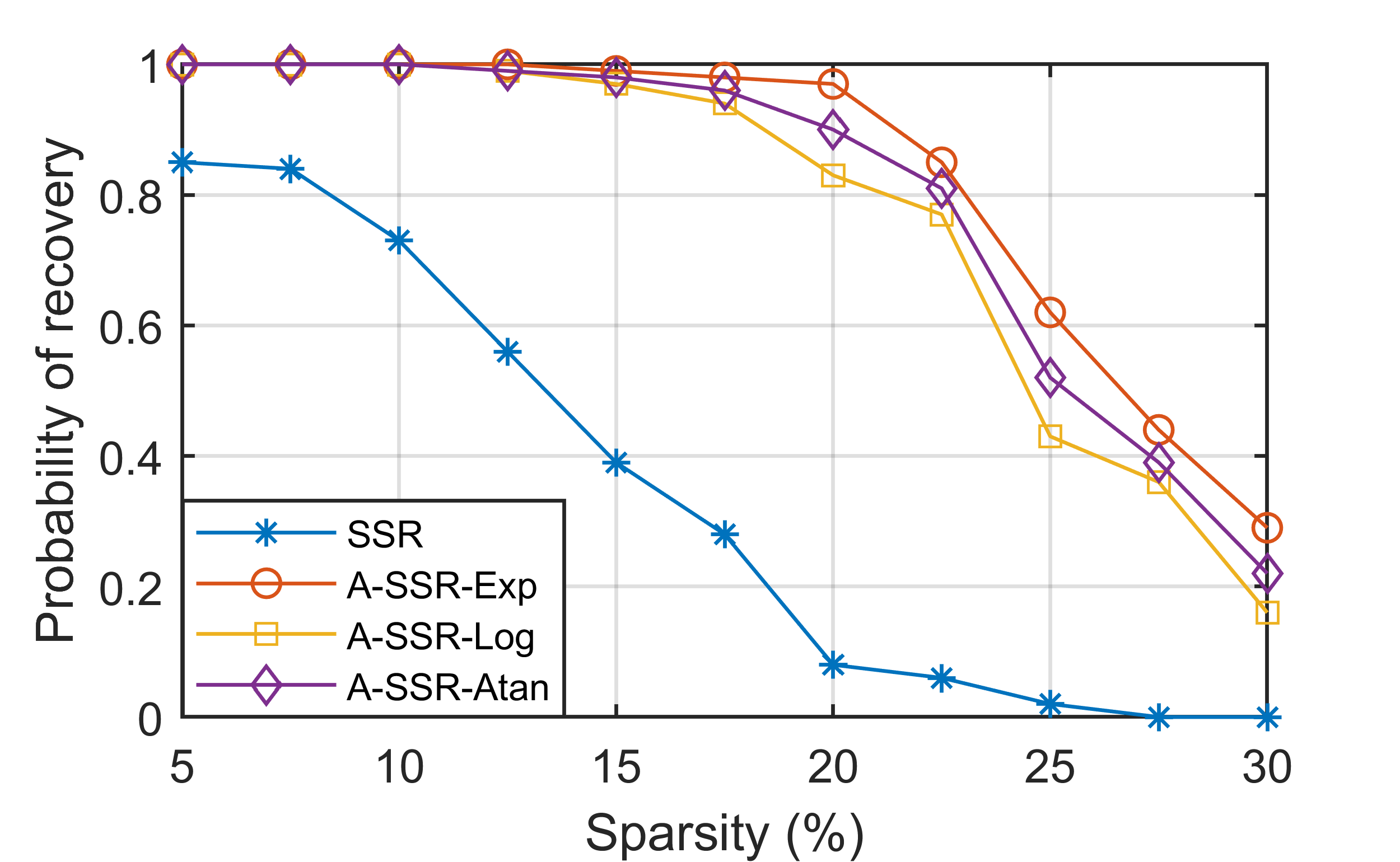}
		\caption{\rmfamily \fontsize{8pt}{0} Probability of Recovery}
		\label{Prob}
	\end{subfigure}
	\begin{subfigure}{0.22\textwidth}
		\includegraphics[width=\textwidth,height=0.75\textwidth]{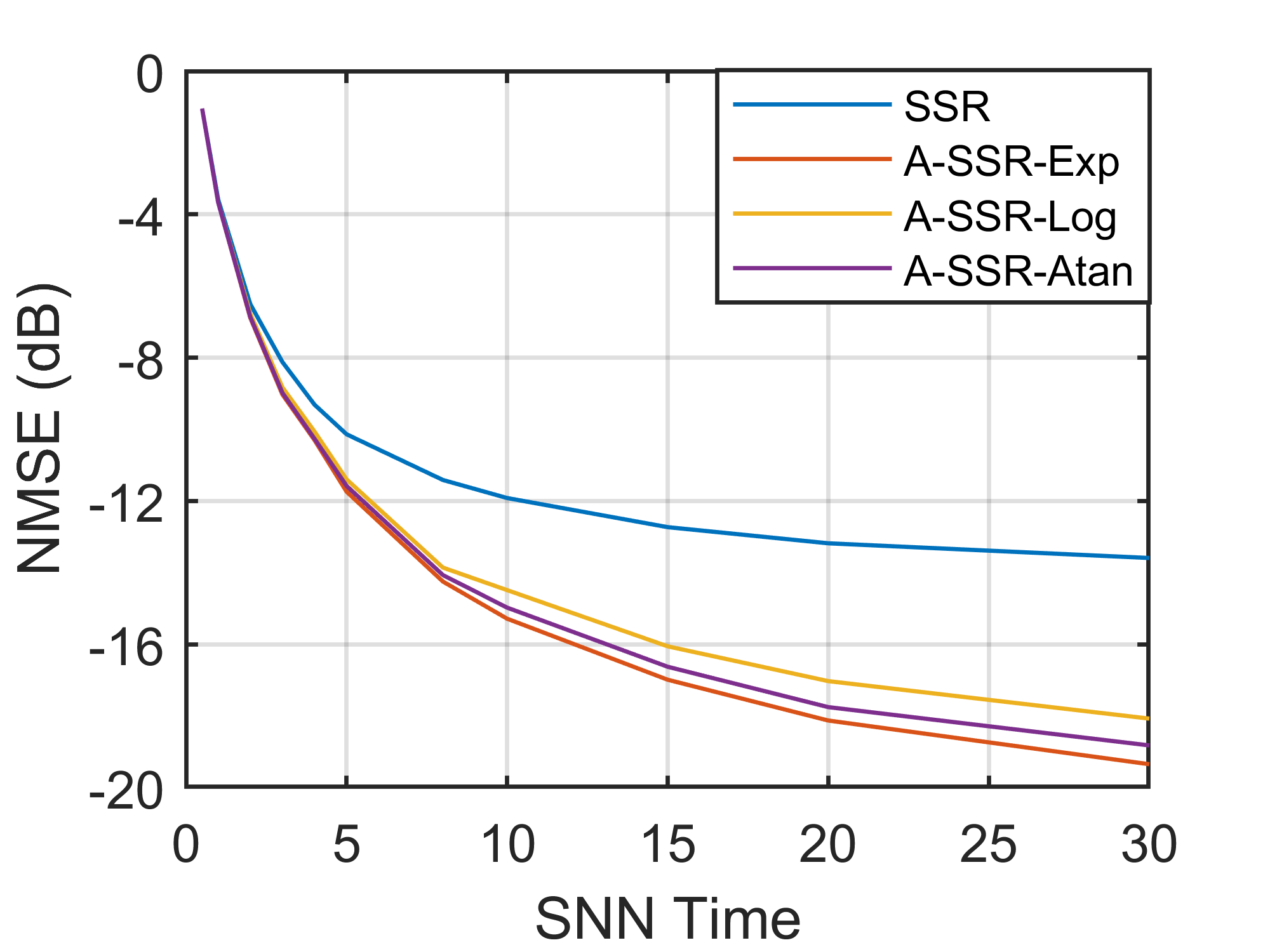}
		\caption{\rmfamily \fontsize{8pt}{0} Convergence}
		\label{speed}
	\end{subfigure}
	\begin{subfigure}{0.22\textwidth}
		\includegraphics[width=\textwidth,height=0.75\textwidth]{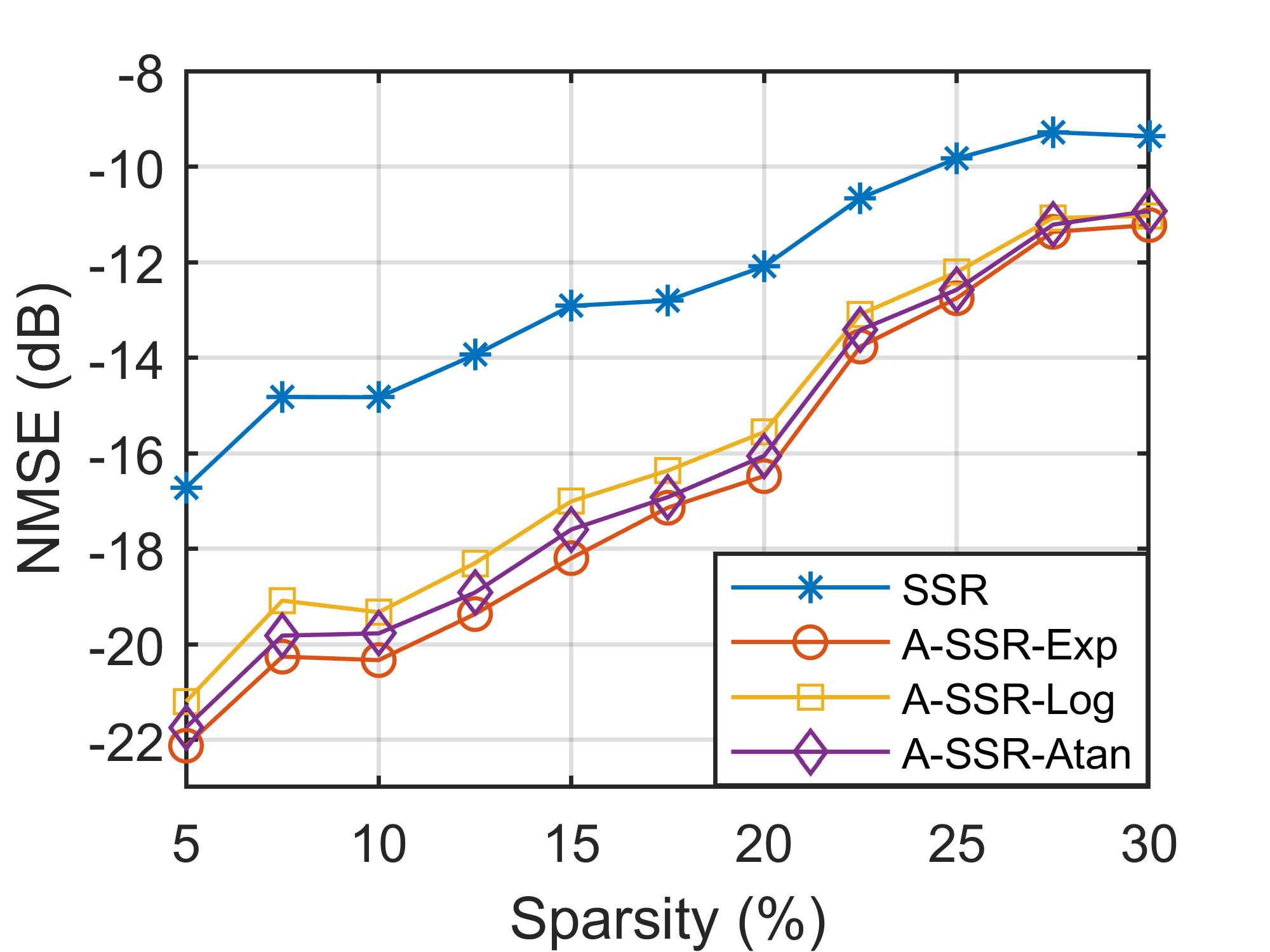}
		\caption{\rmfamily \fontsize{8pt}{0} Sparsity}
		\label{sparsity}
	\end{subfigure}
	\begin{subfigure}{0.22\textwidth}
		\includegraphics[width=\textwidth,height=0.75\textwidth]{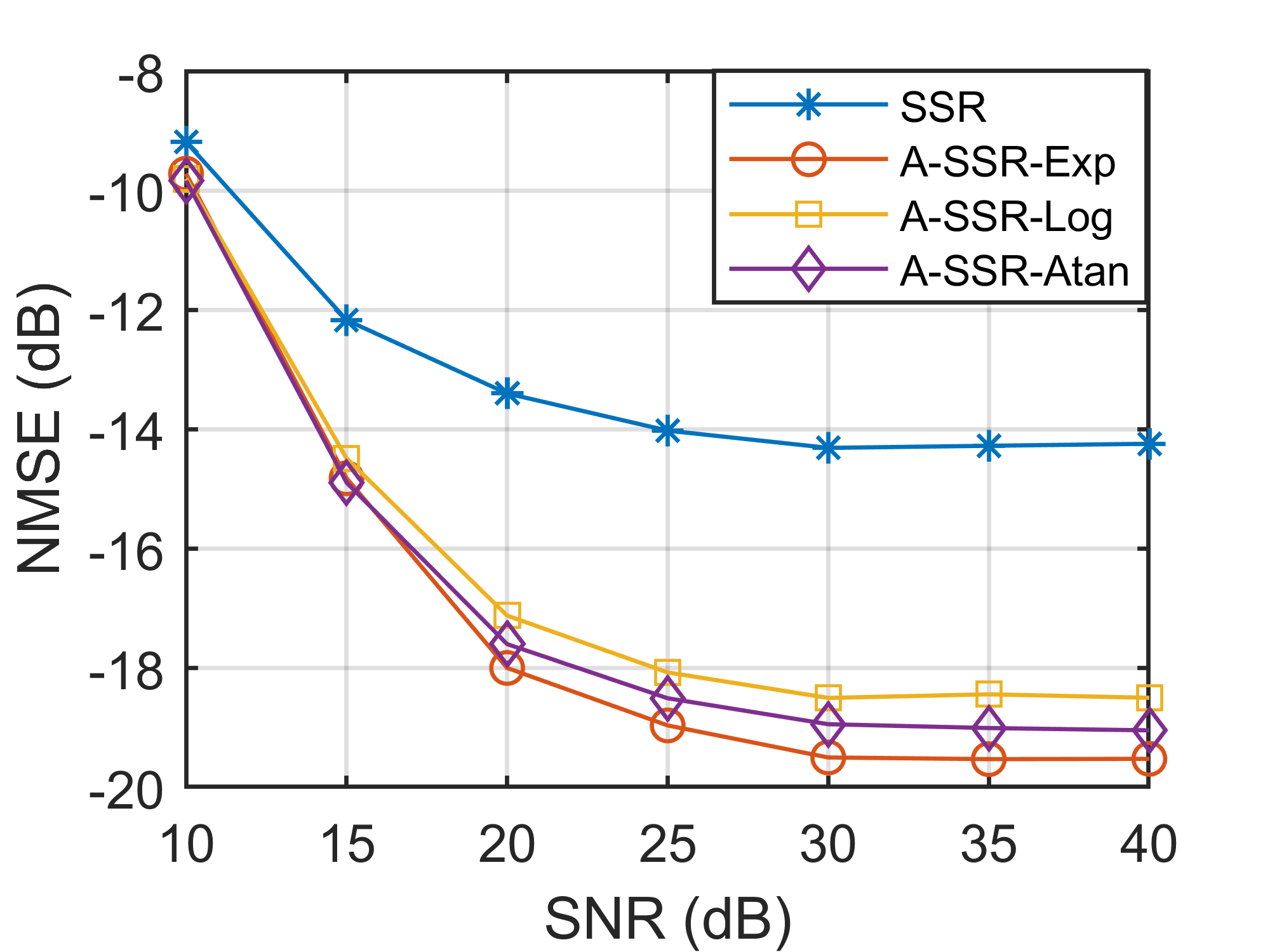}
		\caption{\rmfamily \fontsize{8pt}{0} Noise}
		\label{snr}
	\end{subfigure}
	\begin{subfigure}{0.22\textwidth}
		\includegraphics[width=\textwidth,height=0.75\textwidth]{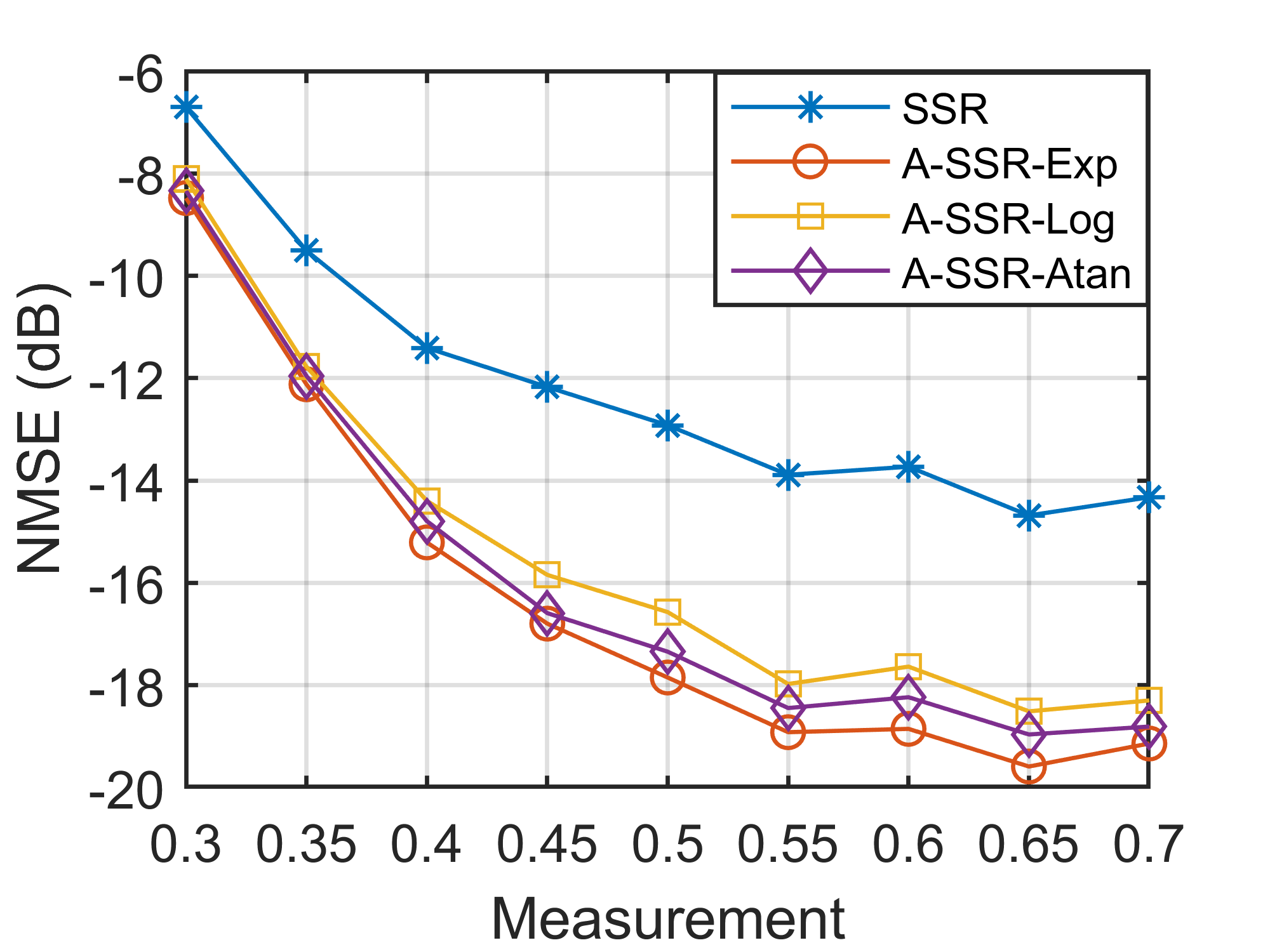}
		\caption{\rmfamily \fontsize{8pt}{0} Measurement}
		\label{measurement}
	\end{subfigure}
	\caption{.\ \ The performance comparisons. (a), (b) The potential plot (bottom) and spike raster plot (top) of SSR and the A-SSR with exponential penalty. SNN time is relative to the absolute time by $\tau$. Dashed box indicates the first stable period. (c) is the probability of successful recovery in the noise-free scenario.
    (d) is the convergence comparison. $NMSE=10 \log _{10} (\| \hat{\mathbf{a}}-\mathbf{a}\|_{2}^{2}/\|\hat{\mathbf{a}}
    \|_{2}^{2})$, where $\hat{\mathbf{a}}$ denotes the original signal. (e)-(g) are the performance under different sparsity, noise, and measurement. Sparsity is the percentage of non-zero coefficients in total. Measurement represents the ratio $M/N$ of dictionary $\bm{\Phi} \in \mathbb{R}^{M \times N}$.
    }
\end{figure*}
\begin{lemma}\label{lemma-1}
	There exist an upper bound $B_+$ and a lower bound $B_-$ such that $\mu_{i}(t), u_{i}(t) \in [B_{-},B_{+}], \forall i, t \geq 0$.
\end{lemma}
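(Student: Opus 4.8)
The plan is to obtain the bounds directly from the defining ODE for the soma current $\mu_i(t)$ in (\ref{mu}), namely $\tau\dot{\mu}_i(t) = b_i - \mu_i(t) - \sum_{j\neq i}\Omega_{ij}\sigma_j(t)$, and then transfer the bound to the average $u_i(t)$ by convexity (averaging). First I would note that the spike trains $\sigma_j$ are nonnegative sums of Dirac deltas, and that the refractory period $t^{ref}>0$ caps the instantaneous spiking rate of each neuron; together with $\|\bm{\phi}_j\|_2=1$ and Cauchy--Schwarz ($|\Omega_{ij}|=|\bm{\phi}_i^T\bm{\phi}_j|\leq 1$), the interaction term $\sum_{j\neq i}\Omega_{ij}\sigma_j(t)$ is controlled: informally, each neuron cannot contribute more than one spike per $t^{ref}$ window, so the time-averaged interaction is bounded by a constant depending only on $N$, $t^{ref}$, and $\tau$. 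I would also record $|b_i| = |\bm{\phi}_i^T\mathbf{s}| \leq \|\mathbf{s}\|_2$.

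Next I would treat (\ref{mu}) as a linear scalar ODE with forcing: integrating the variation-of-constants formula $\mu_i(t) = \mu_i(0)e^{-t/\tau} + \frac{1}{\tau}\int_0^t e^{-(t-s)/\tau}\bigl(b_i - \sum_{j\neq i}\Omega_{ij}\sigma_j(s)\bigr)\,ds$. Since $\mu_i(0)=b_i$ (no spikes initially) and the convolution kernel $\frac{1}{\tau}e^{-(t-s)/\tau}$ integrates to at most $1$, the contribution of $b_i$ stays within $[-\|\mathbf{s}\|_2,\|\mathbf{s}\|_2]$; the contribution of the spike term is nonpositive in a way that is bounded below, because between consecutive spikes of neuron $j$ the decayed spike kernel is summable — the geometric decay over inter-spike gaps of length $\geq t^{ref}$ gives $\frac{1}{\tau}\sum_k e^{-(t-t_{j,k})/\tau} \leq \frac{1}{\tau}\cdot\frac{1}{1-e^{-t^{ref}/\tau}}$. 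Summing over $j\neq i$ and multiplying by $\max_j|\Omega_{ij}|\leq 1$ yields a uniform constant, say $B_{\mathrm{int}}$, so that $\mu_i(t)\in[-\|\mathbf{s}\|_2 - B_{\mathrm{int}},\, \|\mathbf{s}\|_2]$ for all $i,t\geq 0$. Setting $B_+ = \|\mathbf{s}\|_2$ and $B_- = -\|\mathbf{s}\|_2 - B_{\mathrm{int}}$ gives the claim for $\mu_i$. For $u_i(t) = \frac{1}{t}\int_0^t\mu_i(s)\,ds$, the bound is immediate: an average of quantities lying in $[B_-,B_+]$ stays in $[B_-,B_+]$.

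I expect the main obstacle to be making rigorous the bound on the spiking-induced term — specifically, justifying that the refractory constraint $t^{ref}>0$ really does force inter-spike intervals of neuron $j$ to be at least $t^{ref}$ (so that the decaying-exponential sum is a convergent geometric series), and handling the Dirac-delta nature of $\sigma_j$ carefully when plugging it into the variation-of-constants integral (which is cleanest read as a Stieltjes/measure integral, producing a jump of size $-\Omega_{ij}/\tau$ in $\mu_i$ at each $t_{j,k}$, followed by exponential relaxation toward $b_i$). Once that bookkeeping is done, the rest is the elementary linear-ODE estimate and the trivial averaging step above.
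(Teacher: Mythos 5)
Your proposal is correct and follows essentially the same route as the paper: bound $|b_i|$, use the refractory period to bound the decayed spike-kernel sum for each neuron by the geometric series $\frac{1}{\tau}\sum_{c\geq 0} e^{-ct^{ref}/\tau}=\frac{1}{\tau(1-e^{-t^{ref}/\tau})}$ (the paper's $\beta$, applied to the convolution form $\mu_i = b_i - \sum_{j\neq i}\Omega_{ij}(\alpha*\sigma_j)$, which is exactly what your variation-of-constants computation reproduces), and then pass the bound to $u_i$ by averaging. The only slip is your claim that the interaction term is nonpositive (giving $B_+=\|\mathbf{s}\|_2$): $\Omega_{ij}=\bm{\phi}_i^T\bm{\phi}_j$ can be negative, but your own estimate with $\max_j|\Omega_{ij}|$ already yields the symmetric two-sided bound $\pm\bigl(b_{max}+(N-1)\Omega_{max}\beta\bigr)$ that the paper states, so nothing essential is lost.
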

\begin{proof}
	Based on the assumption that dictionary atoms are  normalized to unit norm,
	we define the auxiliary parameters:
	$\Omega_{max}\stackrel{\text { def }}{=} \max_{j \neq i} \left|\bm{\phi}_{i}^{T} \bm{\phi}_{j}\right|$ and
	$b_{max} \stackrel{\text { def }}{=} \max_{i} \left| b_{i} \right|$.
	Then, with the refractory period in spiking neurons, the duration between spikes cannot be arbitrarily small whenever these two spike times exist, i.e. $t_{i,k+1}-t_{i,k} \geq t^{ref}, \forall k \geq 0$. Therefore, we have the inequality
	$
	\beta \stackrel{\text { def }}{=} \frac{1}{\tau}\sum_{c=0}^{\infty} e^{-(ct^{ref}/\tau)} \geq (\alpha * \sigma)(t),
	$
	and the following relationship can be obtained 
	based on (\ref{mu})
	$$
	\mu_{i}(t)\leq b_{max}+\left(N-1\right) \Omega_{max}\beta,
	$$
	where $N$ denotes the total number of neurons in A-SSR. 
	Similarly, the relationship $\mu_{i}(t) \geq -b_{max}-(N-1)\Omega_{max} \beta$ can be derived. Since $u_i$ is the average of $\mu_i$, there exist bounds $B_{+}=b_{max}+\left(N-1\right) \Omega_{max} \beta$ and $B_{-}=-b_{max}-(N-1)\Omega_{max} \beta$ such that $\mu_{i}(t), u_{i}(t) \in [B_{-},B_{+}],\forall i,t \geq 0$.
\end{proof}

\begin{lemma}\label{lemma-4}
	As $t\rightarrow\infty$, the average soma current $\mathbf{u}(t)$ and the firing rate $\mathbf{a}(t)$ of A-SSR is globally asymptotically convergent, and $\mathbf{a}(t)$ will converge to the critical point of (\ref{concave-CLASSO}).
\end{lemma}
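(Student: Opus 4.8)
The plan is to show that the time-averaged dynamics of A-SSR coincides, up to a forcing term that vanishes as $t\to\infty$, with the auxiliary system (\ref{aux-system}), and then to transfer the convergence already established for that system (Lemmas \ref{lemma-fir}--\ref{lemma-2}) to A-SSR itself, the boundedness of Lemma \ref{lemma-1} being what makes the perturbation vanish. First I would derive the averaged state equation: with $u_i(t)=\frac{1}{t}\int_0^t\mu_i(s)\,ds$ one has $t\dot u_i(t)=\mu_i(t)-u_i(t)$, and integrating the second line of (\ref{mu}) over $[0,t]$ together with $\int_0^t\sigma_j(s)\,ds=t\,a_j(t)$ gives $\tau(\mu_i(t)-\mu_i(0))=t b_i-t u_i(t)-t\sum_{j\neq i}\Omega_{ij}a_j(t)$. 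Eliminating $\mu_i(t)$ between these two relations yields
\begin{equation}\label{plan-avg}
\tau\dot u_i(t)=b_i-u_i(t)-\sum_{j\neq i}\Omega_{ij}a_j(t)+e_i(t),\qquad e_i(t)=\frac{\tau\bigl(\mu_i(0)-u_i(t)\bigr)}{t},
\end{equation}
and, since $u_i(t)\in[B_{-},B_{+}]$ by Lemma \ref{lemma-1}, the forcing $e_i(t)=O(1/t)\to0$, i.e.\ $\mathbf{u}(t)$ obeys the first line of (\ref{aux-system}) asymptotically.

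Next I would derive the averaged output law. Dividing (\ref{potential}) by $t$ and using $\nu^{s}=1$ gives $a_i(t)=u_i(t)-\Lambda_i(t)-\nu_i(t)/t+L_i(t)/t$, where $\Lambda_i(t)=\lambda g'(|a_i(t)|)$ and $L_i(t)\ge0$ is the non-decreasing correction produced by resetting at the lower bound $\nu^{-}=0$; the spiking/resetting mechanism keeps $\nu_i(t)$ in the bounded band $[\nu^{-},\nu^{s}]$, so $\nu_i(t)/t\to0$, and $a_i(t)\ge0$ always. I would then split the neurons into those that keep spiking with bounded inter-spike intervals — for which $\nu_i$ never revisits the lower bound after a finite time, so $L_i$ is eventually constant and $L_i(t)/t\to0$ — and those that fall silent after a finite time — for which $a_i(t)\to0$ and the identity forces $\limsup_t(u_i(t)-\Lambda_i(t))\le0$ — to conclude in both regimes that $a_i(t)=\max(u_i(t)-\Lambda_i(t),0)+o(1)$, the output law of (\ref{aux-system}) up to a vanishing term. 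Combining this with (\ref{plan-avg}) and the relation $u_i-a_i\in\Lambda_i\operatorname{sgn}(a_i)$ from the proof of Lemma \ref{lemma-fir}, A-SSR realizes the gradient-like flow $\tau\dot{\mathbf{u}}(t)\in-\partial E(\mathbf{a}(t))+\mathbf{r}(t)$ with $\|\mathbf{r}(t)\|=O(1/t)\to0$.

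Finally I would transfer convergence. By Lemma \ref{lemma-2} the unperturbed auxiliary system is globally asymptotically convergent, and by Lemma \ref{lemma-fir} each of its limit points satisfies $0\in\partial E(\mathbf{a}^{\infty})$, i.e.\ is a critical point of (\ref{concave-CLASSO}). For A-SSR I would re-run the Lyapunov/Łojasiewicz argument of \cite{balavoineConvergenceNeuralNetwork2013a} on the perturbed flow: writing $\mathbf{a}=f(\mathbf{u})$ componentwise with $f'>1$ on the active set (Lemma \ref{lemma-2}), the Lyapunov computation gives $\tfrac{d}{dt}E(\mathbf{a}(t))=-\tau\|\dot{\mathbf{u}}(t)\|_{D(t)}^2+\langle\mathbf{r}(t),\dot{\mathbf{a}}(t)\rangle$ with $D(t)\succeq0$ diagonal, so since $\mathbf{r}(t)\in L^2$ a Young-type split absorbs the cross term and gives $\dot{\mathbf{a}}\in L^2$; then $E(\mathbf{a}(t))$ converges, the $\omega$-limit set of $(\mathbf{u}(t),\mathbf{a}(t))$ is a compact connected set of equilibria of (\ref{aux-system}) on which $E$ is constant, and the subanalyticity of $E$ (rule-1) supplies the Łojasiewicz inequality near that set, upgrading the trajectory length to finite and forcing convergence to a single equilibrium; hence $\mathbf{a}(t)$ converges, to a critical point of (\ref{concave-CLASSO}).

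The hard part is this last upgrade. The forcing $\mathbf{r}(t)$ is only $O(1/t)$: it lies in every $L^p$ with $p>1$ but \emph{not} in $L^1$, whereas the textbook Łojasiewicz length bound for perturbed gradient flows asks for an $L^1$ perturbation. I would try to close this either by extracting a sharper decay or cancellation of $\mathbf{r}(t)$ — exploiting that $\mathbf{r}(t)$ is essentially $\tau(\bm{\mu}(0)-\mathbf{u}(t))/t$ with $\mathbf{u}(t)$ already convergent — or, failing that, by ruling out a nontrivial continuum of critical points of (\ref{concave-CLASSO}) sharing a common value (plausible from the quadratic-plus-separable structure of $E$ and genericity of $(\mathbf{s},\bm{\Phi})$), which by itself collapses the connected $\omega$-limit set to a point. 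Everything else — the two averaged identities and the case split by spiking behaviour — is routine bookkeeping with the bounds $B_{\pm}$ and the potential band $[\nu^{-},\nu^{s}]$.
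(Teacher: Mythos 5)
Your proposal follows essentially the same route as the paper's proof: you derive exactly the paper's two averaged identities --- your perturbed state equation is (\ref{dot-u-3}) with $e_i(t)=-\tau(u_i(t)-\mu_i(0))/t$, and your averaged output law is (\ref{u-a-temp-3})/(\ref{u-a-max-3}) --- use Lemma \ref{lemma-1} and $\nu_i(t)\in[\nu^-,\nu^s]$ to make the forcing vanish, and then transfer the convergence of the auxiliary system (Lemmas \ref{lemma-fir} and \ref{lemma-2}) to A-SSR, concluding $0\in\partial E(\mathbf{a}^*)$ at the limit. The differences are all in your favor in terms of rigor: you explicitly track the lower-bound reset correction $L_i(t)$ and split neurons by eventual spiking behaviour, where the paper passes from (\ref{u-a-temp-3}) to the max form in one line; and, more importantly, you correctly identify that a trajectory of a system with an $O(1/t)$ (hence non-$L^1$) vanishing perturbation does not automatically inherit the \L{}ojasiewicz-based single-point convergence of the unperturbed auxiliary system. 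The paper does not engage with this at all --- it simply asserts that since (\ref{dot-u-3}) and (\ref{u-a-max-3}) ``converge to'' (\ref{aux-system}), Lemma \ref{lemma-2} applies to A-SSR as well --- so the one step you flag as the hard part and leave open is precisely the step the paper treats as obvious. In short: your proposal is the paper's argument done honestly; neither closes the $L^1$ versus $O(1/t)$ gap, and your two suggested repairs (sharper decay of $\mathbf{r}(t)$, or discreteness of the critical set of $E$ for generic $(\mathbf{s},\bm{\Phi})$) are reasonable directions that the paper would also need to make the transfer step airtight.
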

\begin{proof}
	With the derivative derived from (\ref{u}): 
	\begin{equation}\label{u-dot}
	\dot{u}_{i}(t)=\frac{\mu_{i}(t)-u_{i}(t)}{t},
	\end{equation}  
	we have the equation below by applying  $t^{-1}\int_{0}^{t}	ds$ to $\dot{\mu}$ in (\ref{mu})
	\begin{equation}\label{dot-u-3}
	\tau\dot{u}_{i}(t)=b_{i}-u_{i}(t)-\sum_{j \neq i} \Omega_{i j} a_{j}(t)-\tau\frac{u_{i}(t)-\mu_{i}(0)}{t}.
	\end{equation}
	On the other hand, based on $\nu^{s}=1$ and the definition of $a_i(t)$, we can derive the following relationship from (\ref{potential}):
	\begin{equation}\label{u-a-temp-3}
	\begin{aligned}
	a_i(t)&=\frac{1}{t}\int_{0}^{t}(\mu_{i}(s)-\Lambda_i(t))ds - \frac{\nu_{i}(t)}{t}
	\\
	&=u_i(t)-\Lambda_i(t)-\frac{\nu_{i}(t)}{t}.
	\end{aligned}
	\end{equation}	
	Under $a_i(t)\geq 0$, equation (\ref{u-a-temp-3}) can be transferred to
	\begin{equation}\label{u-a-max-3}
	a_i(t)=\max(u_i(t)-\Lambda_i(t)-\nu_i(t)/t,\ 0).
	\end{equation}
	With $\nu_i(t)\in [\nu^-,\nu^{s}]$ and  Lemma \ref{lemma-1}, it is obvious that (\ref{dot-u-3}) and (\ref{u-a-max-3}) will converge to (\ref{aux-system}) as $t \rightarrow \infty$, i.e. the A-SSR will converge to the auxiliary system. Therefore, with Lemma \ref{lemma-2}, the $\mathbf{u}(t)$ and $\mathbf{a}(t)$ of A-SSR are also globally asymptotically 
	convergent when time goes to infinity.
	\par 
	Assume that $\mathbf{u}^*$ is the stable point of A-SSR and $\mathbf{a}^*$ is the corresponding output. With (\ref{u-dot})
	and Lemma \ref{lemma-1}, we have
	\begin{equation}\label{u=0}
	\dot{\mathbf{u}}^*=\lim _{t \rightarrow \infty} \dot{\mathbf{u}}(t)=\lim _{t \rightarrow \infty}\frac{\bm{\mu}(t)-\mathbf{u}(t)}{t}=0.
	\end{equation}
	Since A-SSR converges to the auxiliary system, the A-SSR system satisfies $\tau\dot{\mathbf{u}}(t) \in -\partial E(\mathbf{a}(t))$ as $t\rightarrow\infty$. Combining (\ref{u=0}), it is clear that $0\in \partial E(\mathbf{a}^{*})$, which means that as $t \rightarrow \infty$, $\mathbf{a}(t)$ will converge  to $\mathbf{a}^{*}$, the critical point of (\ref{concave-CLASSO}).
\end{proof}
With above results, we complete the proof of Theorem \ref {main-theo}.

\section{Numerical Experiment}\label{chapter-3}
Finally, to verify the performance of A-SSR, we present numerical experiments on Nengo, a neural engineering simulation platform \cite{bekolayNengoPythonTool2014}.
At first, 
we compare the working processes between SSR and A-SSR with exponential penalty $g(|a_i|)=1-e^{-\gamma|a_i|}$, where $\gamma=1$. For simplicity, we design a 3-neuron SNN, and set the original signal as $\hat{\mathbf{a}}=[0.4792,0,0.9754]^T$.
In Fig. \ref{SLCA}, \ref{A-SSR}, these active neurons (e.g., neuron-1) in A-SSR tend to produce more spikes compared to the SSR system, since the adaptive mechanism encourages these active neurons to spike more frequently. This characteristic not only leads to better accuracy, but also makes A-SSR be in periodical stabilization earlier, i.e. the system will converge faster.
In the end, the normalized mean square errors ($NMSEs$) generated by SSR and A-SSR are $-19.9461(dB)$ and $-28.9681(dB)$ correspondingly.
\par 
Next, to explore the performance of A-SSR with different adaptive ways, we add A-SSR with $g(|a_i|)=\log(|a_i|+\epsilon)$ and $g(|a_i|)=\arctan(|a_i|/\eta)$ into experiments, and set constants $\epsilon=1$, $\eta=1$. 
In Fig. \ref{Prob}, we define successful recovery the case $NMSE<-15(dB)$ under the dictionary $\bm{\Phi}^{100\times 200}$, and compare the probability of successful recovery 
by varying sparsity.
Afterward, we consider the situation where the signal is corrupted by white Gaussian noise. With these default settings: $\bm{\Phi} \in \mathbb{R}^{100\times 200}$, $SNR=20(dB)$ and sparsity $=15\%$, we present the experiments from four aspects: Fig. \ref{speed} shows the convergence comparison under the same conditions, and Fig. \ref{sparsity}, \ref{snr}, \ref{measurement} show the performance under different levels of sparsity, noise, and measurement respectively.

\par 
Evidently, all the A-SSR algorithms achieve a general improvement on accuracy compared to SSR. Besides, the performance of A-SSR will vary with the chosen non-convex penalty, i.e. the adaptive way, and the exponential penalty is the best choice for the proposed experiments.
\section{Conclusion}\label{chapter-4}

We investigated the improvement of SSR based on the optimization problem. By establishing the connection between the non-convex penalty and the adaptive mechanism, our A-SSR system can avoid the underestimation of high-amplitude components, thereby obtaining more accurate estimates.

\vfill\pagebreak

%
%

\bibliography{library2}
\bibliographystyle{IEEEbib}
\end{document}